\newcommand{\skipval}{0.87mm} 
\newtheorem{theorem}{Theorem}[section]
\newtheorem{property}{Property}[section]
\long\def\comment#1{}
\newfont{\bbb}{msbm10 scaled 700}
\newfont{\bb}{msbm10 scaled 1100}
\newcommand{\bv}{{\bf b}}
\newcommand{\xv}{{\bf x}}
\newcommand{\yv}{{\bf y}}
\newcommand{\zv}{{\bf z}}
\newcommand{\Hm}{{\bf H}}
\newcommand{\Id}{{\bf I}}
\newcommand{\Um}{{\bf U}}
\newcommand{\Sigmam}{\hbox{\boldmath$\Sigma$}}
\begin{document}
%

\title{Supplementary Appendix for: Constrained Perturbation Regularization Approach for Signal Estimation Using Random Matrix Theory}
%
%

        



\author{Mohamed~Suliman, Tarig~Ballal, Abla~Kammoun, and Tareq Y. Al-Naffouri


}

\maketitle

\begin{abstract}
In this supplementary appendix we provide proofs and additional extensive simulations that complement the analysis of the main paper (constrained perturbation regularization approach for signal estimation using random matrix theory).
\end{abstract}

\section{Properties of the COPRA function $S\left(\tilde{\gamma}_{\text{o}}\right)$}
\label{sec3:Properties}

Firstly, we discuss the general properties of the COPRA function
\begin{align}
\label{eq:BPR sec}
S\left(\tilde{\gamma_{\text{o}}}\right)  &= \text{Tr}\left(\Sigmam^{2}\left(\Sigmam^{2}+N\tilde{\gamma_{\text{o}}}\Id\right)^{-2}\bv\bv^{H}\right) \Big[\delta_{\text{o}}^{2} \tilde{\delta}_{\text{o}}^{2} - \tilde{\gamma_{\text{o}}}^{2} \delta_{\text{o}} \nonumber\\
& - \tilde{\gamma_{\text{o}}} \delta_{\text{o}} \tilde{\delta}_{\text{o}} \Big] +\text{Tr}\left(\left(\Sigmam^{2}+N\tilde{\gamma_{\text{o}}}\Id\right)^{-2}\bv\bv^{H}\right) \times \nonumber\\
&\Big[( N  \delta_{\text{o}} \tilde{\delta}_{\text{o}}  \left( \tilde{\gamma_{\text{o}}}^{2} -  \tilde{\gamma_{\text{o}}} \delta_{\text{o}} \tilde{\delta}_{\text{o}} -  \delta_{\text{o}} \tilde{\delta}_{\text{o}}^{2} \right) +M \tilde{\delta}_{\text{o}} \nonumber\\
& \tilde{\gamma_{\text{o}}} \left(\tilde{\gamma_{\text{o}}}- \tilde{\gamma_{\text{o}}}\delta_{\text{o}} +  \delta_{\text{o}}^{2} \tilde{\delta}_{\text{o}} \right) \Big] = 0,
\end{align}
where $\bv \triangleq \Um^{H} \yv$. 

The COPRA characteristic equation is a function of the problem parameters, the received signal, and the unknown regularizer $\tilde{\gamma}_{\text{o}}$. Our main interest is to find a positive root for (\ref{eq:BPR sec}). Before that, and to simplify the properties analysis of the COPRA function, we will assume that $M/N \approx 1$\footnote{This assumption is to simplify the properties analysis presentation. However, all the properties and the theorems that will be presented and proved in this section can be easily extended for a general ratio of $M/N$ upon following the same steps.}. Thus, (\ref{eq:BPR sec}) can be written as
\begin{align}
\label{eq3:R-BPR sec}
&S\left(\tilde{\gamma}_{\text{o}}\right) =
\text{Tr}\left(\Sigmam^{2}\left(\Sigmam^{2}+N\tilde{\gamma}_{\text{o}}\Id\right)^{-2}\bv\bv^{H}\right)\times \nonumber\\
&\Bigg[\tilde{\gamma}_{\text{o}} \left(\sqrt{\frac{\tilde{\gamma}_{\text{o}}+4}{\tilde{\gamma}_{\text{o}}}}-1\right)-4\Bigg] + 
\text{Tr}\left(\left(\Sigmam^{2}+N\tilde{\gamma}_{\text{o}}\Id\right)^{-2}\bv\bv^{H}\right) \times \nonumber\\
&\Bigg[ N\tilde{\gamma}_{\text{o}} \left(\left(\sqrt{\frac{\tilde{\gamma}_{\text{o}}+4}{\tilde{\gamma}_{\text{o}}}}-1\right) \tilde{\gamma}_{\text{o}}+2 \sqrt{\frac{\tilde{\gamma}_{\text{o}}+4}{\tilde{\gamma}_{\text{o}}}}-4\right) \Bigg] = 0. 
\end{align}

Before proceeding further, we will define the following two functions 
\begin{align}
\label{eq3:R-BPR1 sec}
S_{1}\left(\tilde{\gamma}_{\text{o}}\right) & \triangleq
\text{Tr}\left(\Sigmam^{2}\left(\Sigmam^{2}+N\tilde{\gamma}_{\text{o}}\Id\right)^{-2}\bv\bv^{H}\right) \times \nonumber\\
&\Bigg[\tilde{\gamma}_{\text{o}} \left(\sqrt{\frac{\tilde{\gamma}_{\text{o}}+4}{\tilde{\gamma}_{\text{o}}}}-1\right)-4\Bigg],
\end{align}
and 
\begin{align}
\label{eq3:R-BPR2 sec}
 &S_{2}\left(\tilde{\gamma}_{\text{o}}\right) \triangleq
\text{Tr}\left(\left(\Sigmam^{2}+N\tilde{\gamma}_{\text{o}}\Id\right)^{-2}\bv\bv^{H}\right)\times \nonumber\\
&\Bigg[ N\tilde{\gamma}_{\text{o}} \left(\left(\sqrt{\frac{\tilde{\gamma}_{\text{o}}+4}{\tilde{\gamma}_{\text{o}}}}-1\right) \tilde{\gamma}_{\text{o}}+2 \sqrt{\frac{\tilde{\gamma}_{\text{o}}+4}{\tilde{\gamma}_{\text{o}}}}-4\right) \Bigg].\nonumber\\
\end{align}

Now, let us start our properties discussion by examine some main properties of the COPRA function that are straightforward to proof.

\begin{property}
\label{pro3:p11}
The function $S\left(\tilde{\gamma}_{\text{o}}\right)$ has $M$ discontinuities at $\tilde{\gamma}_{\text{o}} = -\sigma_{i}^{2}/N, \forall i =1,\dotsi,M$. However, these discontinuities are of no interest as far as COPRA is concerned.
\end{property}

\begin{property}
\label{pro3:p3}
$S\left(\tilde{\gamma}_{\text{o}}\right)$ is continuous over the interval $\left(0, +\infty\right)$.
\end{property}

\begin{property}
\label{pro3:p1}
$\lim_{\tilde{\gamma}_{\text{o}} \to +\infty} S\left(\tilde{\gamma}_{\text{o}}\right) = 0$.
\end{property}

\begin{property}
\label{pro3:p2}
$\lim_{\tilde{\gamma}_{\text{o}} \to 0^{+}} S\left(\tilde{\gamma}_{\text{o}}\right)=-4 \ \text{Tr}\left(\Sigmam^{-2}\bv\bv^{H}\right)$.
\end{property}

\begin{proof}
Let $\Sigmam^{2}= \text{diag}\left(\sigma_{1}^{2}, \sigma_{2}^{2}, \dotsi,\sigma_{M}^{2}\right)$ and $\bv \bv^{H} = \text{diag}\left(b_{1}^{2}, b_{2}^{2}, \dotsi,b_{M}^{2}\right)$. Then, (\ref{eq3:R-BPR sec}) can be written as\footnote{Since $\bv \bv^{H}$ is multiplied by a diagonal matrix $\Sigmam^{2}$ inside the trace, we can only consider its diagonal  entries.} 

\begin{align}
\label{eq3:p2_1} 
 S\left(\tilde{\gamma}_{\text{o}}\right) &=\sum_{i=1}^{M}\frac{\sigma_{i}^{2} b_{i}^{2}}{\left(\sigma_{i}^{2}+N\tilde{\gamma}_{\text{o}}\right)^{2}}\Bigg[-4-\tilde{\gamma}_{\text{o}}+ \tilde{\gamma}_{\text{o}}\sqrt{\frac{\tilde{\gamma}_{\text{o}}+4}{\tilde{\gamma}_{\text{o}}}} \ \Bigg] \nonumber\\
&+
\sum_{i=1}^{M}\frac{ b_{i}^{2}}{\left(\sigma_{i}^{2}+N\tilde{\gamma_{\text{o}}
 }\right)^{2}} \Bigg[-4N\tilde{\gamma}_{\text{o}}+2 N\tilde{\gamma}_{\text{o}} \sqrt{\frac{\tilde{\gamma}_{\text{o}}+4}{\tilde{\gamma}_{\text{o}}}} \nonumber\\
 &-N\tilde{\gamma}_{\text{o}}^{2} + N\tilde{\gamma}_{\text{o}}^{2}\sqrt{\frac{\tilde{\gamma}_{\text{o}}+4}{\tilde{\gamma}_{\text{o}}}} \ \Bigg],
\end{align} 
which after some algebraic manipulations yields
\begin{align}
\label{eq3:p2_2} 
 S\left(\tilde{\gamma}_{\text{o}}\right) &= \sum_{i=1}^{M}\frac{\sigma_{i}^{2} b_{i}^{2}}{\left(\sigma_{i}^{2}+N\tilde{\gamma}_{\text{o}}\right)^{2}}\Bigg[-4-\tilde{\gamma}_{\text{o}}+ \sqrt{\tilde{\gamma}_{\text{o}}}\sqrt{\tilde{\gamma}_{\text{o}}+4}\Bigg] \nonumber\\
&+
\sum_{i=1}^{M}\frac{ b_{i}^{2}}{\left(\sigma_{i}^{2}+N\tilde{\gamma_{\text{o}}
 }\right)^{2}} \Bigg[-4N\tilde{\gamma}_{\text{o}}+2 N\sqrt{\tilde{\gamma}_{\text{o}}}\sqrt{\tilde{\gamma}_{\text{o}}+4} \nonumber\\
 &- N\tilde{\gamma}_{\text{o}}^{2} + N\tilde{\gamma}_{\text{o}} \sqrt{\tilde{\gamma}_{\text{o}}}\sqrt{\tilde{\gamma}_{\text{o}}+4} \Bigg].
\end{align}
Now, taking the limit as $\tilde{\gamma}_{\text{o}} \to 0^{+}$ yields
\begin{align} 
\label{eq3:p2_3} 
\lim_{\tilde{\gamma}_{\text{o}} \to 0^{+}} S\left(\tilde{\gamma}_{\text{o}}\right) = -4 \sum_{i=1}^{M}\frac{ b_{i}^{2}}{\sigma_{i}^{2}}  =-4 \text{Tr}\left(\Sigmam^{-2}\bv\bv^{H}\right).
\end{align}
\end{proof}

\begin{property}
\label{pro3:p4}
$\lim_{\tilde{\gamma}_{\text{o}} \to +\infty} S\left(\tilde{\gamma}_{\text{o}}\right)$ approaches zero from the negative region.
\end{property}

\begin{proof}
Starting from (\ref{eq3:p2_2}), we can write this equation as 
\begin{align} 
\label{ApC:eq1} 
 S\left(\tilde{\gamma_{\text{o}}}\right) &= \frac{1}{N^{2}\tilde{\gamma_{\text{o}}}^{2}}\sum_{i=1}^{M}\frac{\sigma_{i}^{2} b_{i}^{2}}{\left(\frac{\sigma_{i}^{2}}{N\tilde{\gamma_{\text{o}}}}+1\right)^{2}}\Bigg[-4-\tilde{\gamma_{\text{o}}}+ \sqrt{\tilde{\gamma_{\text{o}}}}\sqrt{\tilde{\gamma_{\text{o}}}+4}\Bigg] \nonumber\\
&+
\frac{1}{N^{2}\tilde{\gamma_{\text{o}}}^{2}} \sum_{i=1}^{M}\frac{ b_{i}^{2}}{\left(\frac{\sigma_{i}^{2}}{N\tilde{\gamma_{\text{o}}}}+1\right)^{2}} \Bigg[-4N\tilde{\gamma_{\text{o}}}+2 N\sqrt{\tilde{\gamma_{\text{o}}}}\sqrt{\tilde{\gamma_{\text{o}}}+4} \nonumber\\\
&-N\tilde{\gamma_{\text{o}}}^{2} + N\tilde{\gamma_{\text{o}}} \sqrt{\tilde{\gamma_{\text{o}}}}\sqrt{\tilde{\gamma_{\text{o}}}+4} \Bigg].
\end{align}
Now, evaluating the limit of (\ref{ApC:eq1}) as $\tilde{\gamma_{\text{o}}}$ approaches $+\infty$ yields
\begin{align}
\label{ApC:eq2} 
&\lim_{\tilde{\gamma_{\text{o}}} \to +\infty} S\left(\tilde{\gamma_{\text{o}}}\right) = \lim_{\tilde{\gamma_{\text{o}}} \to +\infty} \frac{1}{N^{2}\tilde{\gamma_{\text{o}}}}\Big[\frac{-4}{\tilde{\gamma_{\text{o}}}}  \sum_{i=1}^{M}\sigma_{i}^{2} b_{i}^{2}  -2N \sum_{i=1}^{M} b_{i}^{2}\Big]. \nonumber\\
\end{align}
It can be clearly seen that the limit in (\ref{ApC:eq2}) is equal to zero. However, the COPRA function approaches the zero from the negative direction. 
\end{proof}
\begin{property}
\label{pro3:p5}
The functions $S_{1}\left(\tilde{\gamma}_{\text{o}}\right)$ in (\ref{eq3:R-BPR1 sec}) and $S_{2}\left(\tilde{\gamma}_{\text{o}}\right)$ in (\ref{eq3:R-BPR2 sec}) are completely monotonic in the interval $\left(0 ,+\infty\right)$.
\end{property}

\begin{proof}
According to \cite{feller2008introduction, widder2015laplace}, a function $F\left(\tilde{\gamma}_{\text{o}}\right)$ is completely monotonic if it satisfies

\begin{align}
\label{eq3:completeMonotoneCond}
\left(-1\right)^{n} F^{\left( n \right)}\left(\tilde{\gamma}_{\text{o}}\right) \geq 0, \ 0 <\tilde{\gamma}_{\text{o}} < \infty ,  \forall n \in \mathbb{N},
\end{align}
where $(.)^{\left( n\right)}$ is the $n$'th derivative of the function. \\ 
By continuously differentiating $S_{1}\left(\tilde{\gamma}_{\text{o}}\right)$ and $S_{2}\left(\tilde{\gamma}_{\text{o}}\right)$ we can easily show that the two functions are satisfying the monotonic condition in (\ref{eq3:completeMonotoneCond}).
\end{proof}
\begin{theorem}
\label{theo3:th1}
The COPRA function $S\left(\tilde{\gamma}_{\text{o}}\right)$ in (\ref{eq3:R-BPR sec}) has at most two roots in the interval $\left(0, +\infty \right).$
\end{theorem}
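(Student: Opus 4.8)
The plan is to reduce the root-counting for $S$ to a single-crossing statement for a ratio of its two constituent pieces, and then to feed in the complete monotonicity of Property~\ref{pro3:p5}. First I would record the boundary behaviour already established: by Property~\ref{pro3:p2}, $S(0^{+})=-4\,\text{Tr}(\Sigmam^{-2}\bv\bv^{H})<0$; by Properties~\ref{pro3:p1} and \ref{pro3:p4}, $S(\tilde{\gamma}_{\text{o}})\to 0^{-}$ as $\tilde{\gamma}_{\text{o}}\to+\infty$; and by Property~\ref{pro3:p3} the function is continuous on $(0,+\infty)$. Next I would pin down the signs of the two summands. Writing $h\triangleq\sqrt{\tilde{\gamma}_{\text{o}}^{2}+4\tilde{\gamma}_{\text{o}}}$, the scalar bracket multiplying the trace in \eqref{eq3:R-BPR1 sec} equals $h-\tilde{\gamma}_{\text{o}}-4$, while the one in \eqref{eq3:R-BPR2 sec} equals $Nh(\tilde{\gamma}_{\text{o}}+2-h)$. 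Since $h<\tilde{\gamma}_{\text{o}}+2<\tilde{\gamma}_{\text{o}}+4$ for every $\tilde{\gamma}_{\text{o}}>0$, and the two traces are strictly positive, this gives $S_{1}(\tilde{\gamma}_{\text{o}})<0<S_{2}(\tilde{\gamma}_{\text{o}})$ throughout $(0,+\infty)$.

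Because $S_{2}>0$ on the whole interval, the zeros of $S=S_{1}+S_{2}$ coincide exactly with the solutions of $\Phi(\tilde{\gamma}_{\text{o}})\triangleq -S_{1}(\tilde{\gamma}_{\text{o}})/S_{2}(\tilde{\gamma}_{\text{o}})=1$, where $\Phi>0$. It therefore suffices to show that the positive function $\Phi$ attains the value $1$ at most twice, for which it is enough to prove that $\Phi$ is \emph{unimodal} on $(0,+\infty)$. The sign of $\Phi'$ is the sign of the Wronskian $W\triangleq S_{1}'S_{2}-S_{1}S_{2}'$, so the entire theorem reduces to the claim that $W$ changes sign at most once on $(0,+\infty)$.

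To control $W$ I would bring in Property~\ref{pro3:p5}. Complete monotonicity of $-S_{1}$ gives, via Bernstein's theorem \cite{widder2015laplace}, a representation $-S_{1}(\tilde{\gamma}_{\text{o}})=\int_{0}^{\infty}e^{-\tilde{\gamma}_{\text{o}}t}\,d\mu_{1}(t)$ with $\mu_{1}\ge 0$, so that $-S_{1}$ is positive, strictly decreasing and convex. For $S_{2}$, its closed form $S_{2}=\text{Tr}((\Sigmam^{2}+N\tilde{\gamma}_{\text{o}}\Id)^{-2}\bv\bv^{H})\,Nh(\tilde{\gamma}_{\text{o}}+2-h)$ together with $S_{2}(0^{+})=0$, $S_{2}(\tilde{\gamma}_{\text{o}})\to 0^{+}$ and $S_{2}>0$ shows that $S_{2}$ rises to a single interior maximum $\tau$ and then decreases. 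On $(0,\tau]$ one has $S_{2}'>0$ and $S_{1}'>0$ (the latter since $-S_{1}$ is decreasing), whence $S'=S_{1}'+S_{2}'>0$ and $S$ can vanish at most once there. The remaining work is the tail $(\tau,+\infty)$, where $S_{2}'<0$ and $S_{1}'>0$ compete; here I would feed the alternating-sign derivative structure furnished by Property~\ref{pro3:p5} into the variation-diminishing property of the exponential kernel to force $W$ to change sign at most once.

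Putting the two regions together, $\Phi$ is unimodal, so $\Phi=1$ has at most two solutions, i.e.\ $S$ has at most two roots in $(0,+\infty)$; this is moreover consistent with the boundary pattern $S(0^{+})<0$ and $S(+\infty)=0^{-}$. \textbf{The main obstacle} is the tail estimate: proving that $W=S_{1}'S_{2}-S_{1}S_{2}'$ has a single sign change (equivalently that $\Phi$ has exactly one critical point). This is precisely the step that requires the full strength of Property~\ref{pro3:p5}, since it is the complete monotonicity of the constituents—channelled through Bernstein's representation and the variation-diminishing property of the Laplace transform—that rules out further oscillations and pins the count at two.
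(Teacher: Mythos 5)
Your argument does not prove the theorem; it reformulates it and then stops exactly where the content lies. The reduction of the root count to the claim that the Wronskian $W=S_{1}'S_{2}-S_{1}S_{2}'$ changes sign at most once on $\left(0,+\infty\right)$ is legitimate (the sign slip---$\Phi'$ has the sign of $-W$, not of $W$---is harmless, since either way one needs at most one sign change), but that claim is never established: you label it ``the main obstacle'' and gesture at Bernstein representations and the variation-diminishing property of the Laplace kernel without carrying out any such argument. Worse, the machinery you propose cannot be applied as stated, because your own boundary computations show $S_{2}(0^{+})=0$, $S_{2}>0$ in the interior, and $S_{2}\to 0^{+}$ at $+\infty$; such a function is not non-increasing, hence not completely monotonic, hence admits no Bernstein representation $\int_{0}^{\infty}e^{-\tilde{\gamma}_{\text{o}}t}\,d\mu(t)$ with $\mu\geq 0$. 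So the ``full strength of Property~\ref{pro3:p5}'' that your tail argument needs is unavailable for $S_{2}$ (your computation in fact shows that Property~\ref{pro3:p5} cannot hold for $S_{2}$ exactly as literally stated; a proof cannot simultaneously rely on it and contradict it). Separately, the assertion that $S_{2}$ ``rises to a single interior maximum $\tau$ and then decreases'' does not follow from positivity plus vanishing at both ends---a positive function vanishing at $0^{+}$ and at $+\infty$ may have several local maxima---so even the split of $\left(0,+\infty\right)$ into $\left(0,\tau\right]$ and a tail, and the conclusion $S'>0$ on $\left(0,\tau\right]$, are unjustified.

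For comparison, the paper's proof never touches the ratio $-S_{1}/S_{2}$ or its Wronskian. It uses Property~\ref{pro3:p5} only through the approximation results of \cite{kammler1976chebyshev, kammler1979least}: a completely monotonic function is uniformly approximable by a finite sum of exponentials $\sum_{i}a_{i}e^{-k_{i}\tilde{\gamma}_{\text{o}}}$. It then writes $S=S_{1}+S_{2}$ as such a sum (taking the number of terms large enough that the approximation error is negligible) and cites \cite{shestopaloff2008sums} for the fact that a sum of exponential functions crosses the abscissa at most twice. One may question the rigor of that route too---it leans on the same Property~\ref{pro3:p5} and on the approximation not perturbing the root count---but it terminates: the two-root bound is delegated to a cited theorem about exponential sums, whereas in your proposal the corresponding step remains an open claim.
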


\begin{proof}
The proof of Theorem~\ref{theo3:th1} will be conducted in two steps.  Firstly, it has been shown in \cite{kammler1976chebyshev, kammler1979least} that any completely monotonic function can be approximated as a sum of exponential functions. That is, if $F\left(\tilde{\gamma}_{\text{o}}\right)$ is a completely monotonic function, it can be approximated as 
\begin{align}
\label{eq3:completeMonotoneApprox}
F\left(\tilde{\gamma}_{\text{o}}\right) \approx \sum_{i=1}^{l} a_i e^{-k_{i} \tilde{\gamma}_{\text{o}}},
\end{align}
where $l$ is the number of the terms in the sum. It has been shown that a best uniform approximation for $F\left(\tilde{\gamma}_{\text{o}}\right)$ always exists, and the error in this approximation gets smaller as we increase the number of the terms $l$. However, our main concern here is the relation defined by (\ref{eq3:completeMonotoneApprox}) more than finding the best number of terms or the unknown parameters $a_{i}$ and $k_{i}$. To conclude, both functions $S_{1}\left(\tilde{\gamma}_{\text{o}}\right)$ in (\ref{eq3:R-BPR1 sec}) and $S_{2}\left(\tilde{\gamma}_{\text{o}}\right)$ in (\ref{eq3:R-BPR2 sec}) can be approximated by a sum of exponential functions. We can assume that $l$ is large enough such that the approximation error in (\ref{eq3:completeMonotoneApprox}) is negligible.

Secondly, it is shown in \cite{shestopaloff2008sums} that the sum of exponential functions has at most two intersections with the abscissa. Consequently, and since the relation in (\ref{eq3:R-BPR sec}) can be expressed as a sum of exponential functions, the function $S\left(\tilde{\gamma}_{\text{o}}\right)$ has at most two roots in the interval $\left(0,+\infty \right)$. 
\end{proof}

\begin{theorem}
\label{theo3:th2}
There always exists a sufficiently small positive value $\epsilon$, such that $\epsilon \to 0^{+} $ and $\epsilon \ll \frac{\sigma_{i}^{2}}{N}$, $ \forall i \in [0,M]$, where the COPRA function $S\left(\tilde{\gamma_{o}}\right)$ in (\ref{eq3:R-BPR sec}) is zero (i.e., $\epsilon$ is a positive root for (\ref{eq3:R-BPR sec})). However, we are not interested in this root.
\end{theorem}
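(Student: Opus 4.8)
The plan is to locate the small root by an intermediate-value argument, carried out in the diagonalized representation (\ref{eq3:p2_2}) and confined to the window $0 < \tilde{\gamma}_{\text{o}} \ll \sigma_i^2/N$. Writing $C_1 \triangleq \text{Tr}\left(\Sigmam^{-2}\bv\bv^{H}\right) = \sum_{i=1}^{M} b_i^2/\sigma_i^2$ and $C_2 \triangleq \text{Tr}\left(\Sigmam^{-4}\bv\bv^{H}\right) = \sum_{i=1}^{M} b_i^2/\sigma_i^4$, both strictly positive, I would first invoke Property~\ref{pro3:p2} to pin down the left endpoint, $S\left(\tilde{\gamma}_{\text{o}}\right)\to -4C_1 < 0$ as $\tilde{\gamma}_{\text{o}}\to 0^{+}$. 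The goal is then to exhibit a second abscissa, still inside the window $\tilde{\gamma}_{\text{o}}\ll\sigma_i^2/N$, at which $S$ is strictly positive; continuity (Property~\ref{pro3:p3}) together with the intermediate value theorem then forces a root $\epsilon$ in between.

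To produce that positive value I would expand $S$ for small $\tilde{\gamma}_{\text{o}}$. Inside the stated window $\sigma_i^2 + N\tilde{\gamma}_{\text{o}}\approx\sigma_i^2$, so each denominator $\left(\sigma_i^2+N\tilde{\gamma}_{\text{o}}\right)^2$ may be replaced by $\sigma_i^4$. Using $\sqrt{\tilde{\gamma}_{\text{o}}}\sqrt{\tilde{\gamma}_{\text{o}}+4} = 2\sqrt{\tilde{\gamma}_{\text{o}}}\left(1+\tilde{\gamma}_{\text{o}}/8 + O(\tilde{\gamma}_{\text{o}}^2)\right)$, the first bracket in (\ref{eq3:p2_2}) becomes $-4 + 2\sqrt{\tilde{\gamma}_{\text{o}}} + O(\tilde{\gamma}_{\text{o}})$ and the second becomes $4N\sqrt{\tilde{\gamma}_{\text{o}}} + O(\tilde{\gamma}_{\text{o}})$. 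Collecting terms yields the leading behaviour
\begin{align}
\label{eq3:th2_lead}
S\left(\tilde{\gamma}_{\text{o}}\right) = -4C_1 + \left(2C_1 + 4NC_2\right)\sqrt{\tilde{\gamma}_{\text{o}}} + O\left(\tilde{\gamma}_{\text{o}}\right),
\end{align}
in which the coefficient of $\sqrt{\tilde{\gamma}_{\text{o}}}$ is strictly positive. Hence, to leading order, $S$ is strictly increasing in $\sqrt{\tilde{\gamma}_{\text{o}}}$ and climbs from $-4C_1$ up through zero.

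Setting the leading expression to zero gives the candidate root $\epsilon = \bigl(4C_1/(2C_1 + 4NC_2)\bigr)^2$, which is $O(N^{-2})$ and therefore indeed meets $\epsilon\to 0^{+}$ and $\epsilon \ll \sigma_i^2/N$. This last estimate closes the argument self-consistently: because the located crossing lies deep inside the very window where the approximation was made, the truncation is legitimate and the crossing is genuine. Applying the intermediate value theorem to the continuous $S$ on an interval $(0,\tilde{\gamma}_{\text{o}}']$ with $\epsilon < \tilde{\gamma}_{\text{o}}' \ll \sigma_i^2/N$ then yields the positive root. Consistency with Theorem~\ref{theo3:th1} (at most two roots) and Property~\ref{pro3:p4} (decay to $0^{-}$ at infinity) is what identifies this as the first, uninteresting root, while the desired regularizer is the second crossing further out.

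The main obstacle is making the leading-order replacement rigorous, namely controlling the $O(\tilde{\gamma}_{\text{o}})$ remainder and the error from replacing $\left(\sigma_i^2+N\tilde{\gamma}_{\text{o}}\right)^2$ by $\sigma_i^4$, so that these corrections cannot cancel the $\left(2C_1+4NC_2\right)\sqrt{\tilde{\gamma}_{\text{o}}}$ term before it overtakes $-4C_1$. The quantitative fact that resolves this is the separation of scales near the origin: the positive driving term is of order $\sqrt{\tilde{\gamma}_{\text{o}}}$, whereas every neglected correction is of order $\tilde{\gamma}_{\text{o}}$ or smaller, so for $\tilde{\gamma}_{\text{o}}$ sufficiently small the sign change is unavoidable. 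I would therefore phrase the final step as an explicit lower bound showing $S$ is positive at some $\tilde{\gamma}_{\text{o}} = O(N^{-2})$, rather than resting on the asymptotic equality (\ref{eq3:th2_lead}) alone.
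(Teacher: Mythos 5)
Your proposal is correct and follows the same basic path as the paper's own proof: both work from the diagonalized form (\ref{eq3:p2_2}), restrict attention to the window $\tilde{\gamma}_{\text{o}} \ll \sigma_i^2/N$, introduce the same two constants $C_1 = \sum_i b_i^2/\sigma_i^2$ and $C_2 = \sum_i b_i^2/\sigma_i^4$, and expand the brackets for small $\tilde{\gamma}_{\text{o}}$ so that the negative offset $-4C_1$ competes against a positive term of order $NC_2\sqrt{\tilde{\gamma}_{\text{o}}}$. Where you genuinely differ is the finishing move. The paper keeps the cubic-in-$\sqrt{\epsilon}$ truncation $S(\epsilon) \approx 2NC_2\,\epsilon\sqrt{\epsilon} + 4NC_2\sqrt{\epsilon} - 4NC_2\,\epsilon - 4C_1$, solves it exactly via the cubic formula (equations (\ref{ApB:eq4})--(\ref{ApB:eq6})), and certifies that the unique real root is positive by reducing to the inequality (\ref{ApB:eq71}), which is merely asserted to be ``easily proved.'' You instead truncate one order further, to $S \approx -4C_1 + \left(2C_1 + 4NC_2\right)\sqrt{\tilde{\gamma}_{\text{o}}}$, read off the candidate root $\epsilon = \bigl(4C_1/(2C_1+4NC_2)\bigr)^2$, and close with continuity (Property~\ref{pro3:p3}), the negative limit at the origin (Property~\ref{pro3:p2}), and the intermediate value theorem. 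Your route is more elementary and arguably tighter: it avoids the opaque Cardano expression and the unproved inequality, and your truncation is internally consistent --- you retain $2C_1\sqrt{\tilde{\gamma}_{\text{o}}}$, which at the crossing scale $\tilde{\gamma}_{\text{o}} \sim (C_1/NC_2)^2$ is of the same order as the $4NC_2\tilde{\gamma}_{\text{o}}$ term the paper keeps while silently discarding $2C_1\sqrt{\tilde{\gamma}_{\text{o}}}$. What the paper's version buys in exchange is an explicit (approximate) closed-form value for $\epsilon$ rather than an existence statement. Two caveats: first, your $O(\tilde{\gamma}_{\text{o}})$ remainders carry factors of $N$ and $C_2$, so the explicit lower bound you promise must be stated with those constants included (the scale separation still works, since $NC_2\tilde{\gamma}_{\text{o}} \sim C_1^2/(NC_2) \ll C_1$ at the crossing); second, the paper additionally justifies the clause ``we are not interested in this root'' by observing that a regularizer $\epsilon \ll \sigma_i^2/N$ provides no regularization, so the RLS solution degenerates to LS --- your appeal to Theorem~\ref{theo3:th1} and Property~\ref{pro3:p4} explains only which of the two roots this one is, not why it is discarded, so that sentence is worth adding.
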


\begin{proof}

To start with, let $\epsilon = \tilde{\gamma_{o}}$, such that $\epsilon \to 0^{+} $ , $\epsilon \ll \frac{\sigma_{i}^{2}}{N}$, $ \forall i \in [0,M]$. As a result, equation~(\ref{eq3:p2_2}) can be written as  
 
\begin{align}
\label{ApB:eq1} 
S\left(\epsilon\right) &=
\sum_{i=1}^{M}\frac{\sigma_{i}^{2} b_{i}^{2}}{\left(\sigma_{i}^{2}+N\epsilon\right)^{2}}\Bigg[-4-\epsilon+  \sqrt{\epsilon} \sqrt{4+\epsilon}\Bigg]\nonumber\\
&+  
\sum_{i=1}^{M}\frac{ b_{i}^{2}}{\left(\sigma_{i}^{2}+N\epsilon\right)^{2}} \Bigg[-4N\epsilon+2 N\sqrt{\epsilon}\sqrt{\epsilon+4}-N\epsilon^{2} \nonumber\\
& +  N\epsilon \sqrt{\epsilon}\sqrt{\epsilon+4} \ \Bigg] .
\end{align}
Due to the properties of $\epsilon$, (\ref{ApB:eq1}) can be approximated as 
\begin{align}
\label{ApB:eq2} 
S\left(\epsilon\right) \approx -4\sum_{i=1}^{M}\sigma_{i}^{-2} b_{i}^{2} +
\sum_{i=1}^{M} \sigma_{i}^{-4}\ b_{i}^{2} \Big[&-4N\epsilon+4 N\sqrt{\epsilon} \nonumber\\
&+ 2 N\epsilon \sqrt{\epsilon} \Big].
\end{align}
Now, to simplify (\ref{ApB:eq2}), let us define $C_{1} = \sum_{i=1}^{M}\sigma_{i}^{-2} b_{i}^{2}$ and $C_{2} = \sum_{i=1}^{M} \sigma_{i}^{-4}\ b_{i}^{2} $. Substituting these two new variables in (\ref{ApB:eq2}) then manipulating result in
\begin{align}
\label{ApB:eq3} 
 &S\left(\epsilon\right) \approx 2 N C_{2} \epsilon \sqrt{\epsilon} +4 N C_{2} \sqrt{\epsilon}  -4N C_{2} \epsilon   -4C_{1}.
\end{align}
Solving $S\left(\epsilon\right)=0$ from (\ref{ApB:eq3}), we obtain one real root and two imaginary roots. The real root is given by  
\begin{equation}
\label{ApB:eq4}
\epsilon = \frac{\left(Q + \sqrt{Q^{2}+Z^{3}}\right)^{1/3}}{3\times 2^{1/3} N^{2} C_{2}^{2}}- \frac{4\times2^{1/3} N C_{2} \left(-2C_{1}+N C_{2}\right)}{\left(Q + \sqrt{Q^{2}+Z^{3}}\right)^{1/3}},
\end{equation}
where
\begin{equation}
\label{ApB:eq5}
Q = 108 N C_{1}C_{2},
\end{equation}
and
\begin{equation}
\label{ApB:eq6}
Z  = 19.05 \ N^{3} C_{2}^{3}\left(-2 C_{1} + N C_{2}\right).
\end{equation}
Now, we would like to know if this real root is positive or not. For (\ref{ApB:eq4}) to be positive, the following condition must hold 
\begin{align}
\label{ApB:eq7}
&\left(Q + \sqrt{Q^{2}+Z^{3}}\right)^{2/3} >  19.05 N^{3} C_{2}^{3} \left(-2 C_{1} +N C_{2}\right).
\end{align}
By using (\ref{ApB:eq6}), we can write (\ref{ApB:eq7}) as
\begin{align}
\label{ApB:eq71}
\left(Q + \sqrt{Q^{2}+Z^{3}}\right)^{2/3} > Z,
\end{align}
which can be easily proved. This concludes that $\epsilon$ is a positive real root for the COPRA function in (\ref{eq3:R-BPR sec}).

Secondly, we would like to know if $\epsilon$ can be considered as a value for our regularization parameter $\tilde{\gamma_{\text{o}}}$. A direct way to prove that can be noted from the fact that having $\epsilon \ll \frac{\sigma_{i}^{2}}{N}$, $ \forall i \in [0,M]$ will not provide any source of regularization to the problem. Hence, the RLS solution converges to the LS.

As a remark, we can assume that the approximation in (\ref{ApB:eq2}) is uniform such that it does not affect the position of the roots. Thus, we can claim that this root is not coming from the negative region of the $x$ axis. However, we can easily prove that the function in (\ref{ApB:eq1}) does not have a negative real root that is close to zero (at least in the region from 0 to -2). Thus, this root is not coming from the negative region as a result of approximating the function (i.e., perturbed root).

\end{proof}

\begin{theorem}
\label{theo3:th4}
The COPRA function $S\left(\tilde{\gamma}_{\text{o}}\right)$ has a unique positive root in the interval $\left(\epsilon, +\infty \right)$.
\end{theorem}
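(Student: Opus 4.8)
The plan is to combine the counting bound of Theorem~\ref{theo3:th1} with the boundary behaviour recorded in Properties~\ref{pro3:p2} and~\ref{pro3:p4}, and to close with the Intermediate Value Theorem. First I would observe that by Theorem~\ref{theo3:th1} the function $S\left(\tilde{\gamma}_{\text{o}}\right)$ has at most two roots in $\left(0,+\infty\right)$, and by Theorem~\ref{theo3:th2} one of them is the small root $\epsilon$. Hence $S\left(\tilde{\gamma}_{\text{o}}\right)$ can possess \emph{at most one} root in the subinterval $\left(\epsilon,+\infty\right)$; this already delivers uniqueness, so the entire burden of the proof is to establish the \emph{existence} of at least one root there.

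For existence I would argue by a sign change. By Property~\ref{pro3:p4} we have $S\left(\tilde{\gamma}_{\text{o}}\right)\to 0^{-}$ as $\tilde{\gamma}_{\text{o}}\to+\infty$, so $S$ is strictly negative for all sufficiently large $\tilde{\gamma}_{\text{o}}$; fix such a point $\tilde{\gamma}_{1}$ with $S\left(\tilde{\gamma}_{1}\right)<0$. It then suffices to exhibit a single point in $\left(\epsilon,\tilde{\gamma}_{1}\right)$ at which $S$ is strictly positive, for then continuity on $\left(0,+\infty\right)$ (Property~\ref{pro3:p3}) together with the Intermediate Value Theorem produces a root strictly between $\epsilon$ and $\tilde{\gamma}_{1}$, i.e. inside $\left(\epsilon,+\infty\right)$.

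To produce that positive value I would examine the behaviour of $S$ immediately to the right of $\epsilon$ through the near-origin approximation (\ref{ApB:eq3}). Substituting $x=\sqrt{\tilde{\gamma}_{\text{o}}}$ turns the approximating expression into the cubic $2NC_{2}x^{3}-4NC_{2}x^{2}+4NC_{2}x-4C_{1}$, whose derivative is $2NC_{2}\left(3x^{2}-4x+2\right)$. The quadratic factor has discriminant $16-24=-8<0$ and is therefore strictly positive, and since $C_{2}>0$ the cubic is strictly increasing in $x$. It starts at the negative value $-4C_{1}$ at $x=0$ (consistent with Property~\ref{pro3:p2}) and passes through its unique positive root, namely $\epsilon$, into strictly positive values. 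Thus $S\left(\tilde{\gamma}_{\text{o}}\right)>0$ on a right-neighbourhood of $\epsilon$, which supplies the required positive sample point and completes the Intermediate Value Theorem step.

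The hard part will be precisely this last step: justifying that $\epsilon$ is a transversal crossing from negative to positive rather than a tangential touch that would leave $S$ negative on both sides and thereby break existence. The strict monotonicity of the approximating cubic in $\sqrt{\tilde{\gamma}_{\text{o}}}$ settles the sign structure, but one must argue that the uniform approximation invoked in Theorem~\ref{theo3:th2} preserves the local behaviour near $\epsilon$ (as already noted in the remark closing that proof). Once the crossing is granted, the Intermediate Value Theorem furnishes existence and the counting bound of Theorem~\ref{theo3:th1} furnishes the at-most-one conclusion, so together they yield a unique positive root in $\left(\epsilon,+\infty\right)$.
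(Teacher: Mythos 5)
Your proposal is correct and follows the same overall skeleton as the paper's proof: Theorem~\ref{theo3:th1} plus Theorem~\ref{theo3:th2} cap the number of roots in $\left(\epsilon,+\infty\right)$ at one, and existence comes from a sign change between a point just to the right of $\epsilon$ and the behaviour at $+\infty$ (Property~\ref{pro3:p4}). The genuine difference lies exactly at the step you single out as the hard part. The paper merely asserts that, since $S\left(\tilde{\gamma}_{\text{o}}\right)$ is negative before $\epsilon$ (Property~\ref{pro3:p2}), it ``switches to the positive region'' after $\epsilon$, and then narrates existence through an extremum: $S$ attains a maximum in $\left(\epsilon,+\infty\right)$ and decreases toward the second crossing $\tilde{\gamma}_{\text{o},2}$, with uniqueness again coming from the two-root bound. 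You instead \emph{prove} the transversal crossing: writing the near-origin approximation (\ref{ApB:eq3}) as a cubic in $x=\sqrt{\tilde{\gamma}_{\text{o}}}$, its derivative $2NC_{2}\left(3x^{2}-4x+2\right)$ is strictly positive (the quadratic factor has negative discriminant), so the approximating function is strictly increasing, vanishes exactly once at $\epsilon$, and is strictly positive to the right of it; the Intermediate Value Theorem with Property~\ref{pro3:p3} then yields existence cleanly, without any appeal to an extremum. This buys rigor the paper lacks: absent the monotonicity argument, a tangential touch at $\epsilon$ would leave $S$ negative on both sides and the paper's existence narrative would collapse. Both your argument and the paper's remain subject to the same caveat, which you flag explicitly, namely that the uniform approximation invoked in Theorem~\ref{theo3:th2} is assumed faithful to the sign structure of $S$ near $\epsilon$.
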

\begin{proof}

According to Theorem~\ref{theo3:th1}, the function $S\left( \tilde{\gamma}_{\text{o}} \right)$ can have no root, one, or two roots. However, we have already proved in Theorem~\ref{theo3:th2} that there exists a significantly small positive root for COPRA function at $\tilde{\gamma}_{\text{o,1}} =\epsilon$ but we are not interested in this root.  In other words, we would like to see if there exists a second root for $S\left(\tilde{\gamma}_{\text{o}}\right)$ in the interval $\left(\epsilon, +\infty\right)$. 

Property~\ref{pro3:p2} shows that the COPRA function starts from a negative value, whereas Property~\ref{pro3:p4} states that the COPRA function approaches zero at $+\infty$ from a negative value. This means that the COPRA function has a negative value before $\epsilon$, then it switches to the positive region after that. Since Property~\ref{pro3:p4} guarantees that the COPRA function approaches zero from a negative direction as $\tilde{\gamma}_{\text{o}}$ approaches $+\infty$, then $S\left( \tilde{\gamma}_{\text{o}} \right)$ has an extremum in the interval $\left(\epsilon, +\infty\right)$ and this extremum is actually a maximum point. If the point of the extremum is considered to be $\tilde{\gamma}_{\text{o,m}}$, then the function starts decreasing for $\tilde{\gamma}_{\text{o}} > \tilde{\gamma}_{\text{o,m}}$ until it approaches the second zero crossing at $\tilde{\gamma}_{o,2}$. As Theorem~\ref{theo3:th1} states clearly that we cannot have more than two roots, we conclude that the COPRA function in (\ref{eq3:R-BPR sec}) has only one unique positive root over the interval $\left(\epsilon, +\infty\right)$.

\end{proof}
\subsection{Finding the root of $S\left(\tilde{\gamma}_{\text{o}}\right)$}
\label{subsec2:find root}
To find the positive root of the COPRA function $S\left(\tilde{\gamma}_{\text{o}}\right)$ in (\ref{eq3:R-BPR sec}), Newton's method \cite{zarowski2004introduction} can be used. The function $S\left(\tilde{\gamma}_{\text{o}}\right)$ is differentiable in the interval $\left(\epsilon, +\infty\right)$ and the expression of the first derivative $S^{'}\left(\tilde{\gamma}_{\text{o}}\right)$ can be easily obtained. Newton's method can then be applied in a straightforward manner to find this root. Starting from an initial value $\gamma_{\text{o}}^{n=0} > \epsilon $ that is sufficiency small, the following iterations are performed:
 \begin{equation}
\label{eq:Newton}
\tilde{\gamma}_{\text{o}}^{n+1} =\tilde{\gamma}_{\text{o}}^{n} - \frac{S\left(\tilde{\gamma}_{\text{o}}\right)}{S^{'}\left(\tilde{\gamma}_{\text{o}}\right) }.
\end{equation}
The iterations stop when $|S(\tilde{\gamma}_{\text{o}}^{n+1})|< \rho$, where $\rho$ is a sufficiently small positive quantity.

\subsection{Convergence}
\label{subsec3:converge}
The convergence of Newton's method can be easily proved. As a result from Theorem~\ref{theo3:th4}, the function $S\left(\tilde{\gamma}_{\text{o}}\right)$ has always a positive value in the interval $\left(\epsilon, \tilde{\gamma}_{o,2}\right)$. It is also clear that $S\left(\tilde{\gamma}_{\text{o}}\right)$ is a decreasing function in the interval $[\tilde{\gamma}_{\text{o}}^{n= 0}, \tilde{\gamma}_{\text{o,2}}]$. Thus, starting from $\tilde{\gamma}_{\text{o}}^{n= 0} \gg \epsilon$, (\ref{eq:Newton}) will produce a consecutive increase estimation for $\tilde{\gamma}_{\text{o}}$. Convergence occurs when $S\left(\tilde{\gamma}_{\text{o}}^{n}\right) \rightarrow 0$ and $\tilde{\gamma}_{\text{o}}^{n+1}\rightarrow \tilde{\gamma}_{\text{o}}^{n}$. 
\subsection{COPRA summery}
\label{subsec3:R-BPR summery}
The proposed COPRA can be summarized as in Algorithm~\ref{RCOPRA ALGORITHM}.

\begin{algorithm}[h!]
\caption{COPRA Summery}
\begin{algorithmic} [1]
\State Define $\rho$ as the iterations stopping criterion.
\State Set $\tilde{\gamma}_{\text{o}}^{n =0}$ to be a sufficiently small positive quantity.
\State Find {$S\left(\tilde{\gamma}_{\text{o}}^{n=0}\right)$} using (\ref{eq3:R-BPR sec}), and compute its derivative {$S^{'}\left(\tilde{\gamma}_{\text{o}}^{n=0}\right)$.}
\While{$|S\left(\tilde{\gamma}_{\text{o}}^{n}\right)| > \rho$}
\State Solve (\ref{eq:Newton}) to get $\tilde{\gamma}_{\text{o}}^{n+1}$.
\State $\tilde{\gamma}_{\text{o}}^{n} =\tilde{\gamma}_{\text{o}}^{n+1}$.
\EndWhile
\State $\gamma_{\text{o}} = N \tilde{\gamma}_{\text{o}}$.
\State Find $\hat{\xv}$ using $\hat{\xv}= \left(\Hm^{H}\Hm+\gamma_{\text{o}}\Id\right)^{-1}\Hm^{H}\yv$.
\end{algorithmic}
\label{RCOPRA ALGORITHM}
\end{algorithm}

\section{Extension: Numerical results}

In this section we present an additional simulation results for the proposed COPRA. 

Firstly, the model matrix is generated as $\Hm \in \mathbb{C}^{100\times90}, \Hm \sim \mathcal{CN}\left(\bm{0}, \Id\right)$ with i.i.d entries. This $\Hm$ is combined with $\xv \sim \mathcal{N}(\bm{0}, \Id)$ that has an i.i.d. elements. The performance is evaluated in terms of \emph{normalized} MSE (NMSE) versus signal-to-noise-ratio (SNR) defined as SNR $ \triangleq  \lVert \Hm\xv\rVert_2^{2}/N \sigma_{\zv}^{2}$. The performance of all the methods is evaluated over $10^{5}$ different noise and matrix realizations at each SNR. From Fig~\ref{fig:under}, it can be seen that the proposed COPRA is providing a NMSE that is very close to the LMMSE estimator. GCV  algorithm is also providing a close performance to the proposed COPRA. For a very high SNR, all methods are providing approximately the same NMSE. 

\begin{figure}[h!]
  \centerline{\includegraphics[width=  3.8in, height = 2.9in]{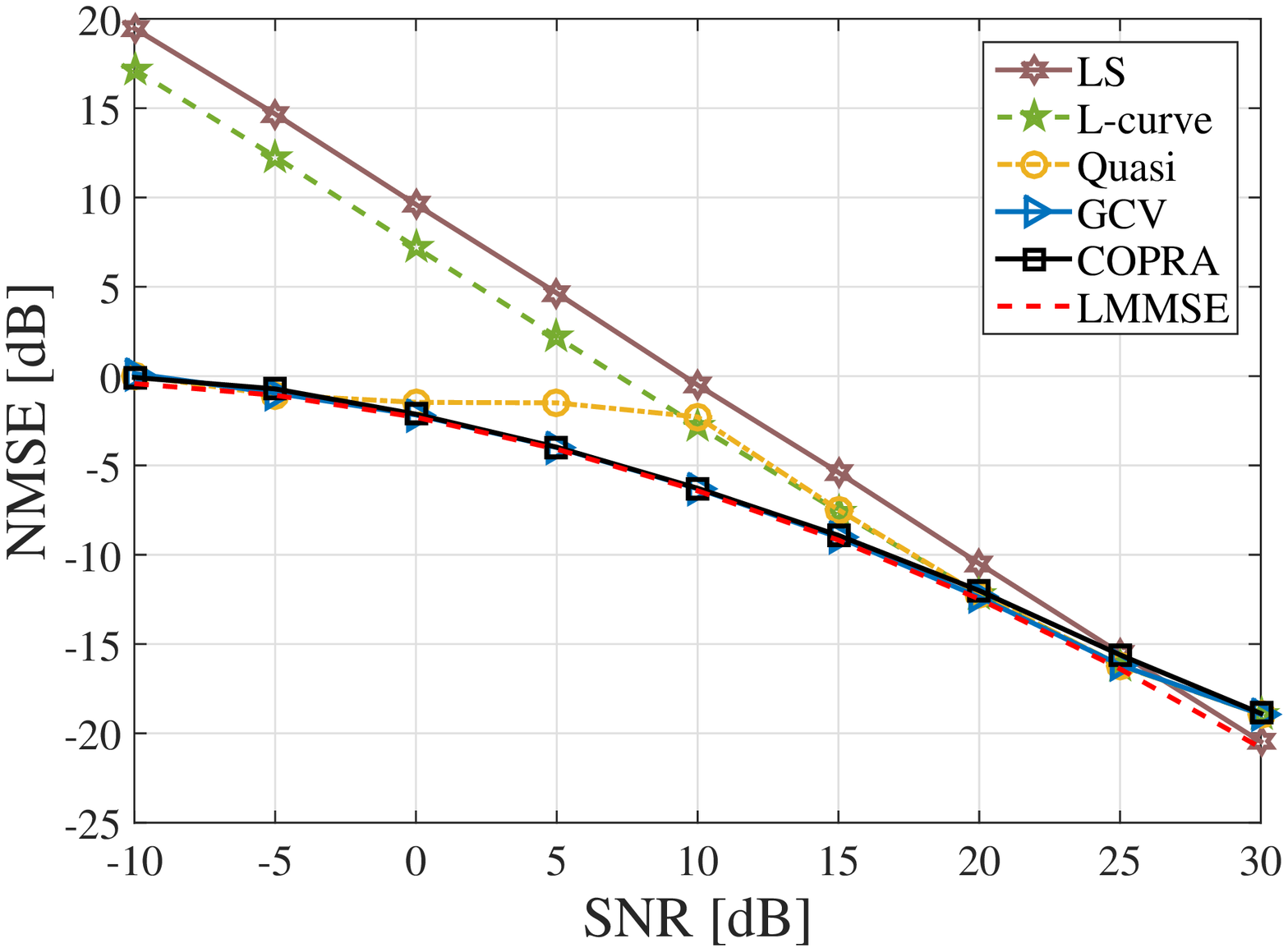}}  
\caption{Performance comparison when $\Hm \in \mathbb{C}^{100\times90}, \Hm \sim \mathcal{CN}\left(\bm{0}, \Id\right)$ with i.i.d entries and $\xv \sim \mathcal{N}(\bm{0}, \Id)$ with i.i.d. elements.}
\label{fig:under}
\end{figure}
Secondly, the matrix $\Hm$ is generated as $\Hm \in \mathbb{R}^{100\times100}, \Hm \sim \mathcal{N}\left(\bm{0}, \Id\right)$ that has an i.i.d entries. This $\Hm$ is combined with a stochastic Gaussian signal $\xv$ that has an independent but not identically distributed (i.n.d.) entries of zero mean and unit variance.  The performance is presented as the NMSE versus SNR (in dB) and is evaluated over $10^5$ different noise and $\Hm$ realizations at each SNR value. 

From Fig~\ref{fig:ind}, it can be observed that the proposed COPRA outperforms all the benchmark methods over all the SNR values. It is also clear that the LS performance it above 10 dB overall the SNR range, which makes it completely unreliable.

\begin{figure}[h!]
  \centerline{\includegraphics[width=  3.8in, height = 2.9in]{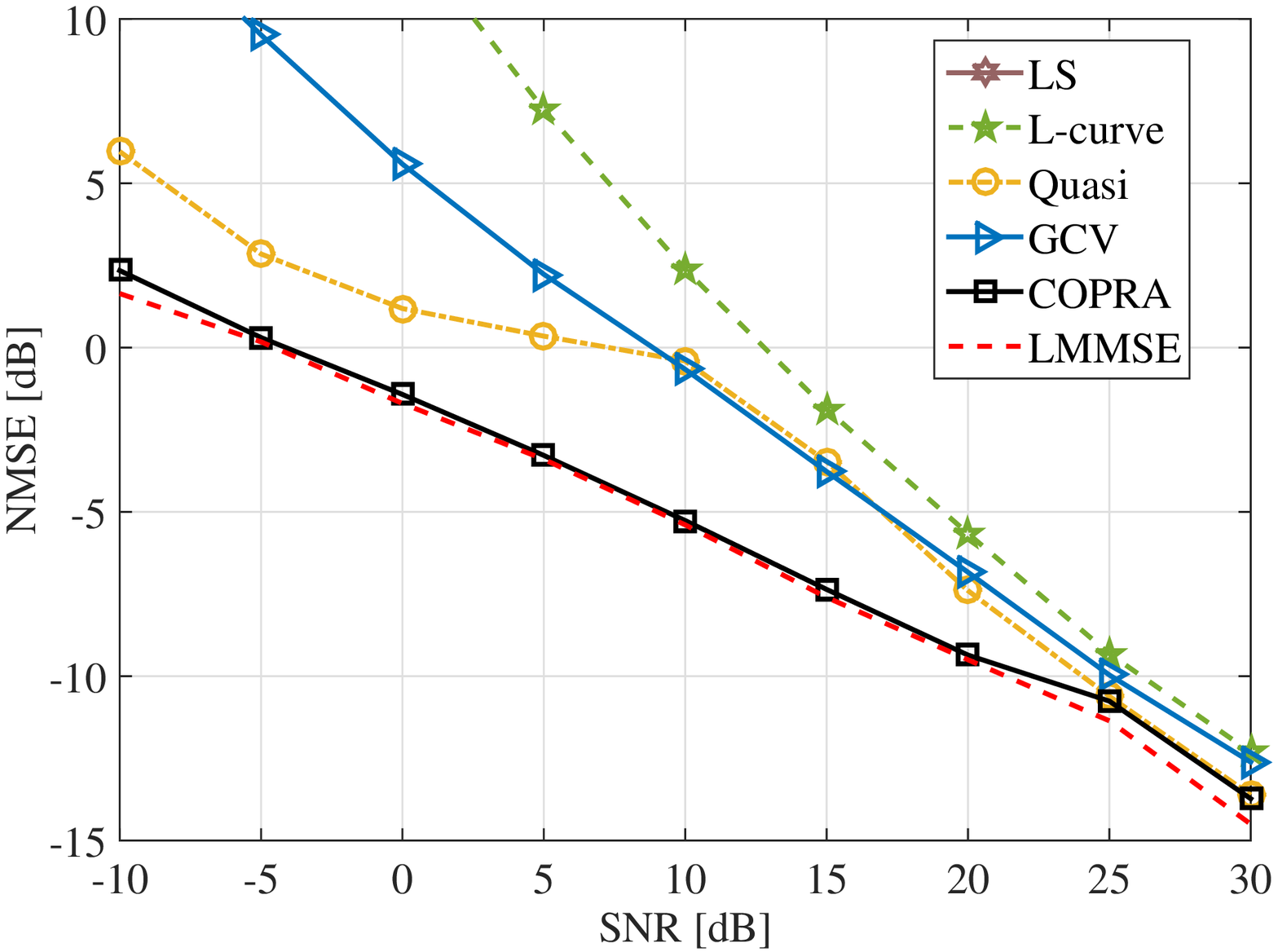}}  
\caption{Performance comparison when $\Hm \in \mathbb{R}^{100\times100}, \Hm \sim \mathcal{N}\left(\bm{0}, \Id\right)$ with i.i.d entries and $\xv$ is stochastic Gaussian with i.n.d. elements.}
\label{fig:ind}
\end{figure}

Finally, a different scenario of the transmitted signal is considered where $\xv$ is taken to be a gray encoded 8-ary QAM signal with unit power and the model matrix is generated as $\Hm  \in \mathbb{C}^{100\times100}, \Hm \sim \mathcal{CN}(\bm{0}, \Id)$. In this example, the performance is evaluated in terms of the bit error rate (BER), and also the NMSE. Noise is added to $\Hm \xv$ according to a certain $\text{E}_{\text{b}}/\text{N}_{\text{o}}$ (energy per bit to noise power spectral density ratio) (in dB) to generate $\yv$. Performance is presented as the BER versus  $\text{E}_{\text{b}}/\text{N}_{\text{o}}$ (in dB), and also the NMSE (in dB) versus $\text{E}_{\text{b}}/\text{N}_{\text{o}}$ (in dB). The performance of all the methods is evaluated over $10^5$ different realizations of the noise $\zv$ and the matrix $\Hm$ at each $\text{E}_{\text{b}}/\text{N}_{\text{o}}$ value.

Fig.~\ref{fig:qam1} plots the BER versus $\text{E}_{\text{b}}/\text{N}_{\text{o}}$ (in dB). It is evident that the proposed COPRA outperforms all the benchmark methods and stay very close to the LMMSE especially in the range from 0 to 15~dB $\text{E}_{\text{b}}/\text{N}_{\text{o}}$ where it offers exactly the BER provided by the LMMSE estimator. 

Fig.~\ref{fig:qam2} depicts the comparison between the methods in terms of the NMSE. Again, we can observe that the proposed COPRA outperforms all the methods and performs almost identically to the LMMSE estimator in the low $\text{E}_{\text{b}}/\text{N}_{\text{o}}$ range.
\begin{figure}[h!]
\centering
 \begin{subfigure}[h]{0.4\textwidth}   
\centerline{\includegraphics[width=  3.8in, height = 2.9in]{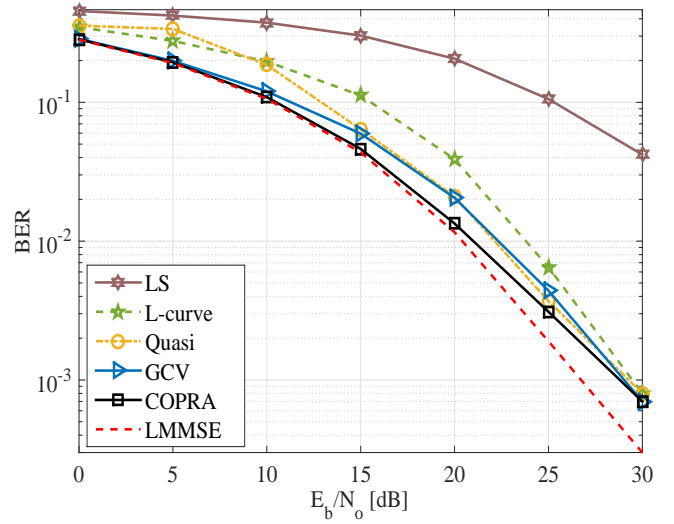}} 
\caption{BER versus $E_{\text{b}}/N_{\text{o}}$ [dB].}
\label{fig:qam1}
    \end{subfigure}%
  
    \begin{subfigure}[h]{0.4\textwidth}
\centerline{\includegraphics[width=  3.8in, height = 2.9in]{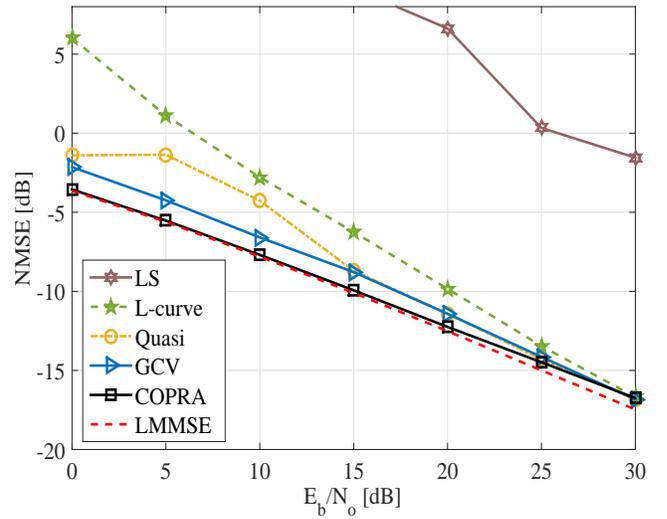}}  
\caption{NMSE [dB] versus $E_{\text{b}}/N_{\text{o}}$ [dB].}
\label{fig:qam2}
    \end{subfigure}
\caption{Performance comparison when $\Hm  \in \mathbb{C}^{100\times100}, \Hm \sim \mathcal{CN}(\bm{0}, \Id)$ is combined with gray encoded 8-ary QAM signal with unit power.}
\label{fig:qam}
\end{figure}

\newpage 
\bibliographystyle{IEEEbib}
\bibliography{Appendex_double}

\end{document}